\documentclass[letterpaper, 10 pt, conference]{ieeeconf}  



\IEEEoverridecommandlockouts                              

\usepackage{cite}
\usepackage{amsmath,amssymb,amsfonts}
\newcommand{\norm}[1]{\left\lVert#1\right\rVert}
\usepackage{algorithmic}
\usepackage[ruled,vlined,linesnumbered]{algorithm2e}
\usepackage{amsthm}
\newtheorem{theorem}{Theorem}
\newtheorem{proposition}{Proposition}
\newtheorem{remark}{Remark}

\usepackage{graphicx}
\usepackage{textcomp}
\usepackage{tabularx}
\usepackage{xcolor}
\usepackage{xurl}
\usepackage{breakurl}
\usepackage{hyperref}

\title{\LARGE \bf
From Global to Local: Hierarchical Probabilistic Verification for Reachability Learning
}

\author{Ebonye Smith$^{1}$, Sampada Deglurkar$^{1}$, Jingqi Li$^{2}$, Gechen Qu$^{1}$, and Claire J. Tomlin$^{1}$%
\thanks{*This work was supported by NSF Safe Learning Enabled Systems, 
the DARPA Assured Autonomy, ANSR, and TIAMAT programs, and
the NASA ULI on Safe Aviation Autonomy.  
E.S. was supported by NSF GRFP, and J.L. by an Oden
Institute Fellowship.
Any opinions, findings, and conclusions or recommendations expressed in this material are those of the authors and do not necessarily reflect the views of any aforementioned organizations. }
\thanks{$^{1}$Ebonye Smith (corresponding author), Sampada Deglurkar, Gechen Qu, and Claire J. Tomlin are with the Department of Electrical Engineering and Computer Sciences, University of California Berkeley, Berkeley, CA 94720, USA, tel:510-643-6610,  {\tt\small ebonyesmith@berkeley.edu, sampada\_deglurkar@berkeley.edu, qugch@berkeley.edu, tomlin@berkeley.edu}}%
\thanks{$^{2}$Jingqi Li is  with the Oden Institute of Computational Engineering and Sciences, University of Texas at Austin, Austin, TX 78712, USA, tel: 765-337-0678, {\tt\small jingqi.li@austin.utexas.edu}}%
}

\begin{document}

\maketitle
\thispagestyle{empty}
\pagestyle{empty}

\begin{abstract}
Hamilton–Jacobi (HJ) reachability provides formal safety guarantees for nonlinear systems. However, it becomes computationally intractable in high-dimensional settings, motivating learning-based approximations that may introduce unsafe errors or overly optimistic safe sets. In this work, we propose a hierarchical probabilistic verification framework for reachability learning that bridges offline global certification and online local refinement. We first construct a coarse safe set using scenario optimization, providing an efficient global probabilistic certificate. We then introduce an online local refinement module that expands the certified safe set near its boundary by solving a sequence of convex programs, recovering regions excluded by the global verification. This refinement reduces conservatism while focusing computation on critical regions of the state space. We provide probabilistic safety guarantees for both the global and locally refined sets. Integrated with a switching mechanism between a learned reachability policy and a model-based controller, the proposed framework improves success rates in goal-reaching tasks with safety constraints, as demonstrated in simulation experiments of two drones racing to a goal with complex safety constraints.

\end{abstract}


\section{Introduction}
\label{introduction}
Provably safe controller design is essential for deploying autonomous systems in safety-critical settings such as multi-agent coordination, human–robot interaction, and high-speed navigation in cluttered environments. Among those applications, a central challenge is to characterize the set of states from which a system can be driven to a desired goal while satisfying safety constraints \cite{wabersich2023data}. Hamilton–Jacobi (HJ) reachability \cite{tomlin2002conflict,lygeros2004reachability,mitchell2005time} computes such sets, but classical HJ methods suffer from the curse of dimensionality \cite{bansal2017hamilton} in high-dimensional systems. To overcome this issue, prior works have proposed learning-based approaches that approximate reachable sets and synthesize corresponding control policies \cite{bansal2021deepreach,fisac2019bridging,hsu2021safety,li2025certifiable}. However, while they enable scalability, they may introduce errors that lead to overly optimistic approximations that misclassify unsafe regions as safe.

\begin{figure}[t]
    \centering
    \includegraphics[width=0.9\linewidth]{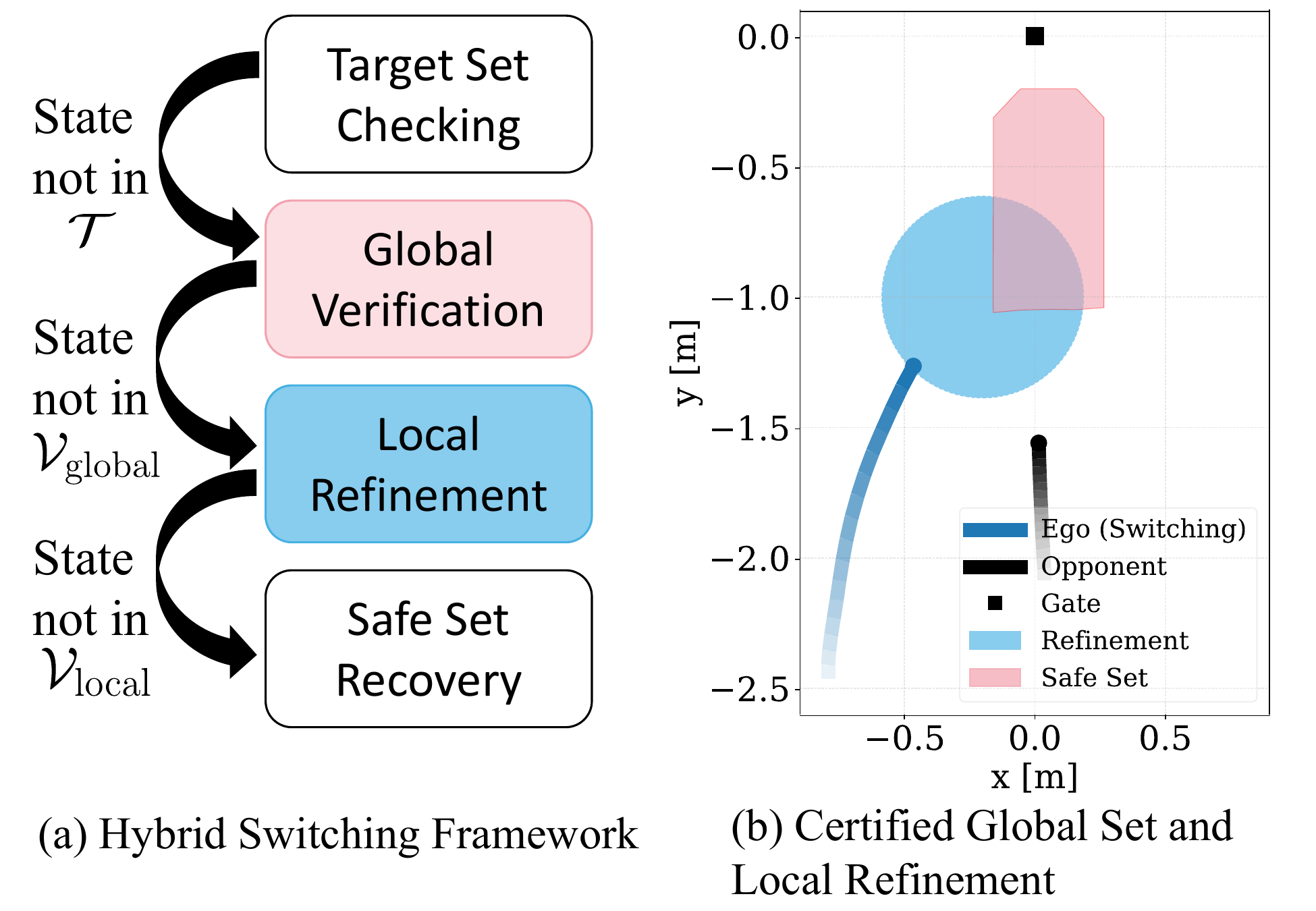}\vspace{-0.5em}
    \caption{
    (a) Building on a learned reachability value function and policy from prior work (e.g., \cite{li2025certifiable}), our hierarchical verification framework provides probabilistic safety guarantees based on whether the state lies in the globally or locally verified set; otherwise, \emph{safe set recovery} drives it to a verified region. Our key novelty is leveraging scenario optimization for local safe set expansion. (b) Drone racing example in which the ego drone (blue) reaches a refined local safe set that guarantees safe overtaking of an opponent (black).
    }
    \label{fig:cover_fig}
\end{figure}

To ensure safety under learning-based approximations, verification becomes necessary. Existing methods can be categorized along three axes. First, deterministic approaches provide worst-case guarantees but can be overly conservative in practice \cite{althoff2011zonotope,coogan2020mixed,hu2020reach,lew2021sampling,abate2022robustly,li2025certifiable}, whereas probabilistic methods are typically less conservative but may require many samples to achieve tight guarantees \cite{liebenwein2018sampling,devonport2020estimating,cao2021estimating,devonport2021data,lin2023generating,lindemann2023safe,lin2024verification,lin2025robust,dietrich2025data}. Second, verification acting \textit{globally} over the entire state space is computationally expensive and sensitive to outliers \cite{lin2025robust}, while \textit{local} verification focuses on regions of interest, enabling more sample-efficient and accurate certification. Third, due to its high computational cost, global verification is typically performed \emph{offline} \cite{lin2023generating}, whereas \textit{online} refinement enables adaptive correction of overly conservative certificates during execution.
Existing approaches typically operate within a single regime, limiting their ability to balance accuracy, efficiency, and scalability in high-dimensional systems.

In this work, we propose a \textit{multi-tier, hierarchical} framework that couples probabilistic global certification with adaptive local refinement, enabling scalable verification with improved accuracy. 
We first construct an offline coarse global safe set via scenario optimization \cite{campi2018introduction}, yielding an efficient but conservative probabilistic certificate. To reduce conservatism and focus computation on relevant regions, we introduce an \textit{online local refinement} module that performs targeted sampling near the safe set boundary. By solving a sequence of convex programs, the method expands the certified region and recovers safe states discarded by the global approximation while maintaining probabilistic guarantees. The framework further incorporates a switching mechanism between a learned reachability policy and a model-based controller: when the state lies outside the reach-avoid set, where the learned policy may be unreliable, the model-based controller is warm-started using the learned policy to produce a safer control action. Specifically, our contributions are:
\begin{itemize}
    \item \textbf{Hierarchical Safe, Target-Reaching Framework:} A multi-tier framework combining global probabilistic verification with online local refinement, integrating a learned reachability policy with a model-based controller for safe goal-reaching.
    
    \item \textbf{Scenario-Based Local Expansion:} An efficient method for safe set expansion via scenario optimization, recovering regions that would otherwise be discarded by conservative global verification.
    
    \item \textbf{Probabilistic Safety Guarantees:} Theoretical guarantees for both globally and locally verified sets, ensuring safety under local refinement.
    
    \item \textbf{Empirical Validation:} Simulation results in drone racing demonstrating improved success rates in safe overtaking compared to baseline methods.
\end{itemize}

\section{Preliminaries}
\label{preliminaries}
In this section, we review reachability analysis and scenario optimization, which underpins our new probabilistic verification framework in Section~\ref{approach}.

\subsection{Reachability Learning}
\label{reachabilitylearning}
Consider a discrete-time nonlinear dynamical system $x_{t+1} = f(x_t , u_t)$, where $x_t\in\mathbb{R}^n$ and $u_t\in \mathcal{U}\subseteq\mathbb{R}^m$ represent the state and control at time $t$. Let \(r:\mathbb{R}^{n}\to\mathbb{R}\) be a Lipschitz continuous reward function with Lipschitz constant \(L_r\), where \(r(x)>0\) indicates that a state \(x\) is in the target set \(\mathcal{T}\). Similarly, let \(c:\mathbb{R}^{n}\to\mathbb{R}\) be a Lipschitz continuous constraint function with Lipschitz constant \(L_c\), where \(c(x) \leq 0\) represents states for which constraint \(c\) is violated. 
Our goal is to identify the set of initial states from which the system can be safely controlled to $\mathcal{T}$ by control policy \(\pi:\mathbb{R}^n \to \mathcal{U}\) within a finite time horizon \(T \in \mathbb{Z}_+\), referred to as the \emph{reach-avoid (RA)} set \cite{margellos2011hamilton}:
\begin{equation}
\label{eq:RAset}
    \mathcal{R}:=\left\{
    \begin{array}{l}
        x_0:\exists\pi \hspace{2mm}\text{such that} \hspace{2mm}\exists t<T, \\
         \left(r(x_t)>0 \hspace{2mm}\wedge\hspace{2mm} \forall \tau \in [0,t], c(x_\tau)>0 \right) 
    \end{array}
    \right\}.
\end{equation}
Denote by \(\xi_{x_0}^{\pi}\) a state trajectory from a state $x_0$ under control policy $\pi$. We define the time-discounted \textit{RA measure} \(g_{\gamma}(\xi_{x_0}^{\pi},t)\) as
\begin{equation}
\label{eq:RAmeasure}
   g_{\gamma}(\xi_{x_0}^{\pi},t):=\min \Big\{\gamma ^tr(x_t), \min_{\tau=0,\dots,t} \gamma^{\tau} c(x_{\tau})\Big\},
\end{equation}
where $\gamma\in(0,1)$ and we have \(g_{\gamma}(\xi_{x_0}^{\pi},t)>0\) if and only if there exists a trajectory from \(x_0\) reaching the target set safely within $t$ time steps.

Given a horizon $T$, the finite-horizon \textit{RA value function} \(V_{\gamma}(x,T)\) evaluates the maximum RA measure:
\begin{equation}
    V_{\gamma}(x_0, T):=\max_{\pi} \sup_{t=0,\dots,T} g_{\gamma}(\xi_{x_0}^{\pi}, t).
\end{equation}
Thus, for all \(\gamma \in (0,1)\), the super-zero level set of \(V_{\gamma}(x,T)\), defined as \(\mathcal{V}_{\gamma}:=\{x:V_{\gamma}(x,T)>0\}\), is equal to the RA set.

We leverage learning-based reachability methods (e.g., \cite{hsu2021safety,li2025certifiable}), which use deep reinforcement learning (RL) to learn a deterministic control policy \(\pi\) and value function \(V_{\gamma}\). In the rest of the paper, we denote the learned policy by \(\pi_{lp}\).

\subsection{Scenario Optimization Theory}
\label{scenarioopt}
Scenario optimization is a data-driven framework used to provide probabilistic guarantees for optimization problems under uncertainty \(\Delta\):
\begin{equation}
\begin{aligned}
    \min_{z \in \mathcal{Z}} \quad & c^\top z \\
    \text{s.t. } \quad & \mathbb{P}_{\Delta}\{h(z, \Delta) >0 \} \leq \epsilon
\end{aligned}
\label{eq:robust_opt}
\end{equation}
where \(z \in \mathcal{Z} \subseteq \mathbb{R}^d\) is the decision variable, \(h\) is convex, and \(\epsilon \in (0,1)\) is the maximum allowable risk.
To solve \eqref{eq:robust_opt} without assuming a specific distribution for \(\Delta\), we employ the \textit{scenario approach}, replacing the chance constraint with \(N\) deterministic constraints sampled from \(\Delta\):
\begin{equation}
\begin{aligned}
    z^\ast_N = & \arg\min_{z \in \mathcal{Z}} & & c^\top z \\
    & \text{s.t.} & & \bigwedge_{i=1}^N h(z, \delta^{(i)}) \leq 0
\end{aligned}
\label{eq:scenario_prob}
\end{equation}
where each $\delta^{(i)}$ is an i.i.d. realization of \(\Delta\). For user-specified risk level \(\epsilon\) and confidence parameter \(\beta\), the required number of deterministic constraints \(N\) to achieve a probabilistic guarantee is given in Theorem~\ref{samplecomplexitythm}.

\begin{theorem}[Sample Complexity \cite{campi2009scenario, devonport2020estimating}]
\label{samplecomplexitythm}
Let \(\epsilon\) be the risk level and \(\beta \in (0,1)\) be the confidence parameter. Denote by \(\mathbb{P}_S\) the probability measure associated with the number of scenarios 
\(
    N \geq \frac{2}{\epsilon} \left( \ln \frac{1}{\beta} + d \right)
    \),
taken from \(\Delta\) and by \(P_\Delta\) the probability measure associated with \(\Delta\). Then \(z_N^\ast\) satisfies the following inequality:
\begin{equation}
\label{eq:scen_opt_ineq}
    \mathbb{P}_S(\mathbb{P}_\Delta(h(z_N^\ast, \delta)>0)\leq \epsilon)\geq 1-\beta.
\end{equation}
\end{theorem}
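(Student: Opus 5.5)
The plan is to follow the classical Campi--Garatti argument for convex scenario programs. It has two parts: an exact bound on the violation probability in terms of $N$, $d$, $\epsilon$, and then an elementary estimate converting it into the stated sample size.

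I will assume the standard hypotheses implicit in the statement: $\mathcal{Z}$ is convex and closed, $h(\cdot,\delta)$ is convex for each $\delta$, and the scenario program \eqref{eq:scenario_prob} is feasible with a unique optimizer $z_N^\ast$. Uniqueness can be enforced by a tie-break rule.

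\textbf{Step 1 (support constraints).} A constraint $\delta^{(i)}$ is a \emph{support constraint} if removing it strictly changes $z_N^\ast$. Via Helly's theorem applied to the convex sets $\{z: h(z,\delta^{(i)})\le 0,\ c^\top z < c^\top z_N^\ast\}$, a convex program in $\mathbb{R}^d$ has at most $d$ support constraints. So for every sample there is a subset $I$ of indices with $|I|\le d$ such that solving with only $\{\delta^{(i)}\}_{i\in I}$ returns the same $z_N^\ast$.

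\textbf{Step 2 (counting bound).} Let $V(z)=\mathbb{P}_\Delta(h(z,\delta)>0)$. Fix a $d$-subset $I$ and let $z_I$ be the optimizer using only the constraints in $I$. The event $\{z_N^\ast=z_I,\ V(z_I)>\epsilon\}$ requires that all $N-d$ remaining i.i.d.\ samples satisfy $h(z_I,\delta^{(j)})\le 0$. Conditioning on the samples in $I$, this has probability at most $(1-\epsilon)^{N-d}$. A union bound over the $\binom{N}{d}$ subsets gives
\[
\mathbb{P}_S\bigl(V(z_N^\ast)>\epsilon\bigr)\le \binom{N}{d}(1-\epsilon)^{N-d}.
\]
Campi--Garatti sharpen this to $\sum_{i=0}^{d-1}\binom{N}{i}\epsilon^i(1-\epsilon)^{N-i}$, using a fully-supported worst case, but the cruder bound suffices for the stated $N$.

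\textbf{Step 3 (sample size).} It remains to show that $N\ge \frac{2}{\epsilon}(\ln\frac1\beta+d)$ makes the right-hand side at most $\beta$. I would use the sharper binomial tail and a Chernoff bound: it is the probability that a $\mathrm{Bin}(N,\epsilon)$ variable is below $d$. Its mean is $N\epsilon\ge 2(\ln\frac1\beta+d)$, so the multiplicative Chernoff bound $\exp(-(N\epsilon-d)^2/(2N\epsilon))$ applies. A short computation shows the exponent is at least $\ln\frac1\beta$. Alternatively, one can check this directly using $(1-\epsilon)\le e^{-\epsilon}$ and $\binom{N}{d}\le (eN/d)^d$.

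\textbf{Main obstacle.} The hardest part is Step 2, specifically handling degenerate, non-fully-supported problems. There the per-subset conditioning argument needs care, since $z_N^\ast=z_I$ may hold for several $I$. The crude union bound sidesteps this, so I would use it first. Step 3 is routine but requires checking constants. Since the statement is quoted from \cite{campi2009scenario, devonport2020estimating}, the proof may ultimately defer to those works.
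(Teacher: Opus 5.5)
\textbf{The gap: the union bound cannot reach the stated $N$.} You claim the crude bound from Step~2 ``suffices for the stated $N$.'' It does not, and neither does the fallback check via $(1-\epsilon)\le e^{-\epsilon}$ and $\binom{N}{d}\le(eN/d)^d$. At $\epsilon N=2(\ln\frac1\beta+d)$ that estimate equals $\exp\bigl(d\ln\frac{eN}{d}-2\ln\frac1\beta-2d+\epsilon d\bigr)$. Here $d\ln\frac{N}{d}\ge d\ln\frac{2}{\epsilon}$, which is unbounded as $\epsilon\to0$. The problem is not just the loose estimate: $\binom{N}{d}(1-\epsilon)^{N-d}$ itself behaves like $\frac{(\epsilon N)^d}{d!\,\epsilon^d}e^{-\epsilon N}$, so with $N=O(1/\epsilon)$ it blows up as $\epsilon\to0$. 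The union-bound route therefore only yields sample sizes with an extra term of order $\frac{d}{\epsilon}\ln\frac1\epsilon$, e.g.\ $\frac{2}{\epsilon}\ln\frac1\beta+2d+\frac{2d}{\epsilon}\ln\frac2\epsilon$. Concretely, take $d=1$ and $\beta=\epsilon=10^{-3}$. Then $N=15816$ satisfies the hypothesis, but $N(1-\epsilon)^{N-1}\approx 2.1\times10^{-3}>\beta$. Your plan to ``sidestep'' degenerate problems with the crude union bound therefore cannot prove the theorem as stated.

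\textbf{What the stated $N$ actually needs.} It rests on the exact Campi--Garatti bound $\mathbb{P}_S(V(z_N^\ast)>\epsilon)\le\sum_{i=0}^{d-1}\binom{N}{i}\epsilon^i(1-\epsilon)^{N-i}$. You quote this but do not prove it, and it is the genuinely hard step. It requires showing that fully-supported problems are the worst case, including the perturbation argument for non-fully-supported or degenerate instances. The support-constraint count plus a union bound does not deliver it. Once it is granted, your Chernoff step is correct: with $\mu=N\epsilon\ge2(\ln\frac1\beta+d)$ one has $(\mu-d)^2-2\mu\ln\frac1\beta=\mu\bigl(\mu-2(\ln\frac1\beta+d)\bigr)+d^2\ge0$. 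The paper gives no proof and simply imports the theorem from the cited works, so the right fix is to cite the binomial-tail theorem and keep your Step~3.

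\textbf{Smaller points.} Helly applied to the sets $\{h(\cdot,\delta^{(i)})\le0\}\cap\{c^\top z<c^\top z_N^\ast\}$ only yields $d+1$ constraints. To get $d$, take the open halfspace as a separate member of the family and use that the constraint sets share the point $z_N^\ast$. Concluding $z_I=z_N^\ast$ also needs every sub-problem's optimizer to be unique, not just the full problem's.

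\textbf{A self-contained special case.} For the paper's own dynamics-bounding program, $d=1$ and $z_N^\ast=\max_i\|x_t^{(i)}-\bar x_t^{(i)}\|_2$. An elementary quantile argument then gives $\mathbb{P}_S(V(z_N^\ast)>\epsilon)\le(1-\epsilon)^N\le e^{-\epsilon N}\le\beta$ directly.
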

Inequality \eqref{eq:scen_opt_ineq} states that with high confidence, the probability of an unseen realization \(\delta \in \Delta\) causing a violation of constraint \(h\) is bounded by \(\epsilon\). Typically, $\beta$ is set to be a very small value, such as $0.001$, to ensure a higher confidence. This comes at a low cost to the sample complexity given the log inverse relationship between \(\beta\) and \(N\). However, a smaller value of \(\epsilon\) incurs a larger sample complexity, and thus \(\epsilon\) is chosen with consideration of the risk versus sample complexity trade-off.

\section{Hybrid Reachability Verification Method}
\label{approach}

In this section, we describe our hierarchical verification framework.
We start with a scenario optimization-based technique to estimate how much two trajectories of the dynamics can deviate from each other in order to enable the global verification (Section \ref{dynamicsbounding}).
Then Section \ref{subsec:global} describes our global probabilistic certificate, and Section \ref{subsec:local} describes our local probabilistic certificate.
The final hierarchical certificate structure is given in Section \ref{subsec:hierarchical} \footnote{Code for this project can be found at: \url{https://github.com/ebonyelsmith/Hierarchical_Probabilistic_Verification}.}.

\subsection{Probabilistic Dynamics Bounding}
\label{dynamicsbounding}
For global verification of the RA set, we must quantify the global sensitivity of the dynamics under the learned policy \(\pi_{lp}\). 
The deterministic verification technique in \cite{li2025certifiable} leverages the Lipschitz constant of the dynamics.
Here, we alleviate conservatism by instead estimating a trajectory deviation quantity \(\Delta x_t^\ast\) using samples.

We collect \(N\) i.i.d. scenario pairs \(\{(\bar{x}^{(i)}_0, x^{(i)}_0)\}_{i=1}^N\) sampled uniformly, where each nominal state \(\bar{x}^{(i)}_0\) is sampled uniformly from \(\bar{\mathcal{X}}\), and each perturbed state \(x^{(i)}_0\) is sampled from an \(\epsilon_x\)-ball \(\mathcal{B}_{\epsilon_x}(\bar{x}^{(i)}_0)\). 
Thus, each pair takes value from the set \(\mathcal{X}_{pair}\). For each pair, we generate a nominal control sequence \(\mathbf{U}^{(i)} = \{{u}_0^{(i)}, \dots, {u}_T^{(i)}\}\) by applying \(\pi_{lp}\) to \(\bar{x}_0^{(i)}\), yielding the nominal trajectory \(\bar{\xi}^{(i)}\). 
For computational simplicity, we approximate the control for \(x_0^{(i)}\) using the same sequence \(\mathbf{U}^{(i)}\), yielding the perturbed trajectory \(\xi^{(i)}\).

The global sensitivity bound \(\Delta x_t^\ast\) is obtained by solving:
\begin{equation}
\begin{aligned}
    \min_{\Delta x_t \in \mathbb{R}} \quad \Delta x_t \\
    & \text{s.t. } \mathbb{P}_{{\mathcal{X}_{pair}}}\left(  \norm{x_t^{} - \bar{x}_t^{}}_2 > \Delta x_t \right) \leq \epsilon.
\end{aligned}
\label{eq:global_lipschitz_prob}
\end{equation}
We leverage scenario optimization theory to obtain the following scenario program:
\begin{equation}
\begin{aligned}
    \Delta x_t^\ast = & \min_{\Delta x_t \in \mathbb{R}} \quad \Delta x_t \\
    & \text{s.t.} \quad  \bigwedge_{i=1}^N \left(  \norm{x_t^{(i)} - \bar{x}_t^{(i)}}_2 \leq\Delta x_t \right)
\end{aligned}
\label{eq:global_lipschitz}
\end{equation}
By Theorem 1, the trajectory deviation remains bounded by \(\Delta x_t^\ast\) with probability \(1-\epsilon\) and confidence \(1-\beta\):  
\begin{equation}
    \mathbb{P}_{S_1}\left(\mathbb{P}_{{\mathcal{X}_{pair}}}\left(  \norm{x_t^{} - \bar{x}_t^{}}_2 > \Delta x_t^\ast \right) \leq \epsilon \right) \geq 1-\beta
\end{equation}
where \(\mathbb{P}_{\mathcal{S}_1}\) denotes the probability measure associated with the number of samples taken.

\subsection{Coarse Global Verification}
\label{subsec:global}

We leverage the global sensitivity bound \(\Delta x_t^\ast\) derived in Section \ref{dynamicsbounding} to obtain a probabilistically certified lower bound of the value function, denoted as \(\check{V}_{\gamma}\), under learned policy \(\pi_{lp}\), as illustrated in Theorem \ref{globalsafetythm}.

\begin{theorem}[Global Probabilistic Safety]
\label{globalsafetythm}
Let $x_0\in \mathcal{B}_{\epsilon_x}(\bar{x}_0)$. Suppose that for each $t\in\{0,1,\dots,T\}$, $\Delta x_t^\ast$ is as calculated above. Let $\check{r}_t:=r(\bar{x}_t)-L_r \Delta x_t^\ast $ and $\check{c}_t:=c(\bar{x}_t)-L_c \Delta x_t^\ast$. 
Define a certificate function $\check{V}_\gamma(\bar{x}_0, T)$, 
\begin{equation}\label{eq:lower bound of V_gamma}
    \check{V}_\gamma(\bar{x}_0, T):=\max_{t=0,\dots, T} \min \{\gamma^t \check{r}_t, \min_{\tau=0,\dots,t} \gamma^\tau \check{c}_\tau\}.
\end{equation}
Then, the following probabilistic guarantee holds:
\begin{equation*}
    \mathbb{P}_{S_1}(\mathbb{P}_{{\mathcal{X}_{pair}}}(\{V_\gamma(x_0) <  0\} \cap \{\check{V}_\gamma(\bar{x}_0, T) \geq 0 \} ) \leq \epsilon) \geq 1-\beta.
\end{equation*}
\end{theorem}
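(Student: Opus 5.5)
We argue by containment of events. Work on the event $G$ of scenario draws for which the conclusion of Theorem~\ref{samplecomplexitythm} holds for the scenario program \eqref{eq:global_lipschitz}. That is, $\mathbb{P}_{\mathcal{X}_{pair}}(\|x_t-\bar{x}_t\|_2>\Delta x_t^\ast)\le\epsilon$, and $\mathbb{P}_{S_1}(G)\ge 1-\beta$. The decision variable is scalar ($d=1$), so the sample size requirement is the one already used above.

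On $G$, we show that the bad event $\{V_\gamma(x_0)<0\}\cap\{\check V_\gamma(\bar x_0,T)\ge 0\}$ is contained in a deviation event whose $\mathbb{P}_{\mathcal{X}_{pair}}$-probability is at most $\epsilon$.

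\textbf{Key step (deterministic).} Fix a pair $(\bar x_0,x_0)$ and suppose $\|x_t-\bar x_t\|_2\le\Delta x_t^\ast$ for all $t$. Here $x_t$ is generated from $x_0$ by the open-loop sequence $\mathbf U$ that $\pi_{lp}$ produces from $\bar x_0$. Lipschitz continuity of $r$ and $c$ gives
\[
r(x_t)\ge r(\bar x_t)-L_r\Delta x_t^\ast=\check r_t,\qquad c(x_\tau)\ge \check c_\tau .
\]
Since $\gamma^t>0$ and $\min$, $\max$ are monotone, we get $g_\gamma(\xi_{x_0}^{\mathbf U},t)\ge \min\{\gamma^t\check r_t,\min_{\tau\le t}\gamma^\tau\check c_\tau\}$ for every $t$. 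Taking the maximum over $t$ yields
\[
\max_{t\le T} g_\gamma(\xi_{x_0}^{\mathbf U},t)\ge \check V_\gamma(\bar x_0,T).
\]
The value function maximizes over control policies, and the open-loop sequence $\mathbf U$ is one admissible choice (a time-indexed policy). Hence $V_\gamma(x_0)\ge\check V_\gamma(\bar x_0,T)$. So whenever $\check V_\gamma\ge0$ we get $V_\gamma(x_0)\ge0$, which contradicts $V_\gamma(x_0)<0$. The bad event is therefore contained in $\{\exists t:\ \|x_t-\bar x_t\|_2>\Delta x_t^\ast\}$.

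\textbf{Main obstacle: combining the time steps.} The scenario guarantee is per time step $t$, while the containment needs all $t\in\{0,\dots,T\}$ at once. A naive union bound gives $(T+1)\epsilon$ and $1-(T+1)\beta$, which is weaker than stated. I would try first to reinterpret \eqref{eq:global_lipschitz} as a single scenario program. The constraint $\bigwedge_i\bigwedge_t \|x^{(i)}_t-\bar x^{(i)}_t\|_2\le\Delta x_t$ is one convex constraint per scenario $i$, with decision vector $(\Delta x_0,\dots,\Delta x_T)$ and objective $\sum_t\Delta x_t$. Its solution coincides with the per-$t$ minimizers, because the constraints decouple across $t$.

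Theorem~\ref{samplecomplexitythm} applied to this joint program gives, with confidence $1-\beta$, that $\mathbb{P}(\exists t:\|x_t-\bar x_t\|>\Delta x_t^\ast)\le\epsilon$. This requires $N\ge\frac{2}{\epsilon}(\ln\frac1\beta+T+1)$. I would state this, or the union-bound alternative with rescaled $\epsilon,\beta$, as the interpretation of ``as calculated above''.

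Monotonicity of probability then finishes the argument. On $G$, $\mathbb{P}_{\mathcal{X}_{pair}}(\text{bad})\le\epsilon$, so $\mathbb{P}_{S_1}(\mathbb{P}_{\mathcal{X}_{pair}}(\text{bad})\le\epsilon)\ge\mathbb{P}_{S_1}(G)\ge1-\beta$.
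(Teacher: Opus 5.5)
Your argument follows the paper's route: Lipschitz lower bounds $r(x_t)\ge\check r_t$ and $c(x_\tau)\ge\check c_\tau$ along the trajectory driven by the nominal controls give $\check V_\gamma(\bar x_0,T)\le V_\gamma(x_0)$, so the bad event lies inside a low-probability event. The paper phrases the last step as $\{V_\gamma(x_0)<0\}\cap\{\check V_\gamma\ge 0\}\subseteq X\cap Y$ with $X=\{V_\gamma(x_0)<\check V_\gamma(\bar x_0,T)\}$, which is the same containment.

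The one substantive difference is the step you call the main obstacle, and there you are more careful than the paper. Its proof derives only per-$t$ statements, $\mathbb{P}_{\mathcal{X}_{pair}}(\check r_t\le r(x_t))\ge 1-\epsilon$ and $\mathbb{P}_{\mathcal{X}_{pair}}(\check c_t\le c(x_t))\ge 1-\epsilon$, each at confidence $1-\beta$. It then asserts that $\check V_\gamma$ is a lower bound ``with probability $1-\epsilon$'', which silently requires the deviation bounds at all $\tau\le t$ to hold simultaneously. Your joint scenario program in $(\Delta x_0,\dots,\Delta x_T)$, or the union bound, is what actually justifies that step. The cost is that the conclusion then needs $N\ge\frac{2}{\epsilon}(\ln\frac{1}{\beta}+T+1)$ rather than the $d=1$ count $N\ge\frac{2}{\epsilon}(\ln\frac{1}{\beta}+1)$ the paper uses. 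You should therefore state this reading of ``as calculated above'' explicitly as a hypothesis, not present it as an interpretation.

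Your remark that $\mathbf U$ is admissible in the maximization defining $V_\gamma$ is also a step the paper skips. Since the paper takes $\pi:\mathbb{R}^n\to\mathcal{U}$ to be stationary, you can close it as follows. Let $t^\star$ attain the maximum in $\check V_\gamma$, and splice out repeated states in $x_0,\dots,x_{t^\star}$. This keeps every transition valid and keeps $r\ge 0$ at the endpoint and $c\ge 0$ along the path. The spliced path is then realized by a stationary policy, so $V_\gamma(x_0)\ge 0$ whenever $\check V_\gamma(\bar x_0,T)\ge 0$, which is all the containment needs.
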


\begin{proof}
    Let $t\in\{1,\dots,T\}$. Define the set of states originating from the ball $\mathcal{B}_{\epsilon_x}(\bar{x}_0) $ under the nominal controls as $\mathcal{X}_t:=\{x_t: x_{\tau+1}=f(x_{\tau}, \pi_{lp}(\bar{x}_\tau)), \tau\le t-1, x_0\in\mathcal{B}_{\epsilon_x}(\bar{x}_0) \}$. Observe that the set \(\bar{\mathcal{X}}_{t,\bar{x}_0}:=\{x_t: \norm{x_t^{} - \bar{x}_t^{}}_2 \leq \Delta x_t^\ast\}\) is a convex outer approximation of $\mathcal{X}_t$. By Lipschitz continuity of $r(x)$, we have that with confidence \(1-\beta\),
\begin{equation*}
    \forall x_t \in \bar{\mathcal{X}}_{t,\bar{x}_0}, \quad \mathbb{P}_{{\mathcal{X}_{pair}}}(|r(x_t) -r(\bar{x}_t)| \leq L_r \Delta x_t^\ast) \geq 1-\epsilon
\end{equation*}
which yields a lower bound of \(r(x_t)\), for all \(x_t \in \bar{\mathcal{X}}_{t,\bar{x}_0}\):
\begin{equation*}
    \mathbb{P}_{{\mathcal{X}_{pair}}}(\check{r}_t \leq r(x_t)) \geq 1-\epsilon.
\end{equation*}
Similarly, define a lower bound of \(c(x_t)\), for all \(x_t \in \bar{\mathcal{X}}_{t,\bar{x}_0}\):
\begin{equation*}
    \mathbb{P}_{{\mathcal{X}_{pair}}}(\check{c}_t \leq c(x_t))\geq 1-\epsilon.
\end{equation*}
As shown in \eqref{eq:lower bound of V_gamma}, we can use \(\check{r}_t\) and \(\check{c}_t\) to construct a probabilistic lower bound function \(\check{V}_\gamma (\bar{x}_0, T)\) for \(V_\gamma (x_0)\), for all \(x_0 \in \mathcal{B}_{\epsilon_x}(\bar{x}_0)\).
In other words, $\check{V}_\gamma(\bar{x}_0,T)$ serves as a lower bound of $ V_\gamma(x) $ with probability \(1-\epsilon\) over the support \({\mathcal{X}_{pair}}\). Denote by $X$ and $Y$ the events $ \{V_\gamma(x_0) <  \check{V}_\gamma(\bar{x}_0, T) \}$ and $\{\check{V}_\gamma(\bar{x}_0, T) \geq 0 \}$, respectively. Thus, we have with confidence at least \(1-\beta\), $\mathbb{P}_{{\mathcal{X}_{pair}}}(X) \leq \epsilon $. By the law of total probability, we have, with confidence at least $1-\beta$,
\begin{equation*}
    \mathbb{P}_{{\mathcal{X}_{pair}}}(X \cap Y) + \mathbb{P}_{{\mathcal{X}_{pair}}}(X \cap Y^c) \leq \epsilon,
\end{equation*}
i.e., $\mathbb{P}_{{\mathcal{X}_{pair}}}(X \cap Y)\le \epsilon$. Thus, with confidence at least $1-\beta$,
\begin{equation}
    \mathbb{P}_{{\mathcal{X}_{pair}}}(\{V_\gamma(x_0) <  0\} \cap \{\check{V}_\gamma(\bar{x}_0, T) \geq 0 \} ) \leq \epsilon. \tag*{\qed}
\end{equation}
\renewcommand{\qed}{}
\end{proof}

Theorem \ref{globalsafetythm} states that with high confidence, the probability that the lower bounded value function classifies an unsafe state as safe is upper bounded by \(\epsilon\). 
We define the verified global safe set \(\mathcal{V}_{global}\) as the super-zero level set of \(\check{V}_{\gamma}\):
\begin{equation}
    \mathcal{V}_{global} := \{ {x} \in \mathcal{\bar{X}} \mid \check{V}_{\gamma}(x, T) \geq 0 \}.
    \label{eq:v_global_lower}
\end{equation}
For each state in \(\mathcal{V}_{global}\), the failure probability for safely reaching \(\mathcal{T}\) is bounded by \(\epsilon\) with confidence \(1-\beta\). 

\subsection{Local Growth Iterative Refinement}
\label{subsec:local}

The global verification tier is inherently conservative due to the choice of \(\epsilon_x\). Therefore, it may be the case that the current state is not within the coarse global set but is safe.  To ``reclaim'' safe regions that may be misclassified unsafe,
we present a quickly computed, high-resolution local refinement on the boundary of \(\mathcal{V}_{global} \), defined \(\partial \mathcal{V}_{global}\), closest to the current state, a technique inspired by \cite{lin2023generating}.

Let \(\check{x}_t \in \partial \mathcal{V}_{global}\) and define a local candidate neighborhood \(\mathcal{B}_r(\check{x}_t)\) on the boundary of \(\mathcal{V}_{global}\). We probabilistically verify the local refinement via iterative scenario optimization to obtain the maximum verifiable radius \(r^\ast\):
\begin{equation}
\begin{aligned}
    r^\ast = & \max_{r \in \mathbb{R}} \quad  r \\
    & \text{s.t.} \quad  \bigwedge_{j=1}^N \sup_{t \in [0, T]} g_{\gamma}(\xi_{x_0^{(j)}}^{\pi_{lp}}, t) > 0, \quad \forall x_0^{(j)} \in \mathcal{B}_r(\check{x}_t).
\end{aligned} 
\label{eq:local_growth_ra}
\end{equation}

As detailed in Algorithm \ref{alg:iterativegrowth}, this refinement consists of specifying some initial refinement radius around the closest boundary point of the verified global set, \(\check{x}_t \in \partial \mathcal{V}_{global}\) (lines 1-2). For \(M\) iterations, \(N\) initial conditions within the region are sampled and simulated under the learned policy (lines 3-4). The RA measure \eqref{eq:RAmeasure} is evaluated for each trajectory, and if a violation is found, the radius of the local refinement is reduced to exclude the closest violation (lines 5-6) and the process is repeated by resampling new initial conditions. If no violations are found, the radius and verified refinement \(\mathcal{V}_{local}:= \mathcal{B}_{r^\ast}(\check{x})\) are returned (lines 7-11).

Using Theorem \ref{samplecomplexitythm}, we obtain the following guarantee on the local growth refinement:
\begin{equation}\label{eq:local growth refinement}
    \mathbb{P}_{S_2}(\mathbb{P}_{\mathcal{V}_{local}}( \sup_{t \in [0, T]} g_{\gamma}(\xi_{x_0^{(j)}}^{\pi_{lp}}, t) \leq 0 )\leq \epsilon)\geq 1-\beta
\end{equation}
where \(\mathbb{P}_{S_2}\) denotes the probability measure associated with the samples. The inequality in \eqref{eq:local growth refinement} implies that, within the local refinement, the probability of failing to safely reach the target set is bounded by \(\epsilon\) with high confidence.

\begin{algorithm}[t]
\caption{IterativeGrowth}
\label{alg:iterativegrowth}
\begin{algorithmic}[1]

\REQUIRE ${x}_t$, $\pi_{lp}$, $\mathcal{V}_{global}$, $N$, $r_{max}$, $M$, $T$
\ENSURE $r_{safe}$, $\mathcal{V}_{local}$,

\STATE \textbf{Initialize:} $r^\ast \gets r_{max}$
\STATE Find closest boundary point $\check{x} \in \partial \mathcal{V}_{global}$ to $x_t$

\FOR{iter=$0,1,\dots,M-1$}
    \STATE \(\mathcal{S}_i \gets\) Sample \(N\) states IID from \(\mathcal{B}_r(\check{x}_t)\) and roll out trajectories \(\xi_{x}^{\pi_{lp}}\).
    \IF{\(\exists x \in \mathcal{S}_i\) s.t. \(\max_{t \in [0,T]} g_{\gamma}(\xi_{x}^{\pi_{lp}},t) \leq 0\)}
        \STATE \(r^\ast \gets \min_{x} \|x - \check{x}_t\|_2\) \\
        \quad \quad \textbf{s.t.} \(\max_{t \in [0,T]} g_{\gamma}(\xi_{x}^{\pi_{lp}},t) \leq 0\)
    \ELSE
        \STATE \textbf{break}
    \ENDIF
\ENDFOR

\STATE \textbf{return} $\mathcal{V}_{local} := \mathcal{B}_{r^*}(\check{x}_t)$
\end{algorithmic}
\end{algorithm}

\subsection{Hierarchical Verification and Control Framework} \label{subsec:hierarchical}
Our new verification framework features a hybrid switching logic that prioritizes the highest available tier of verified safety. In addition to the learned policy, we employ a Model Predictive Path Integral (MPPI) controller \cite{williams2016aggressive}, warm-started with the learned policy, as a model-based fallback if the current state of the system is not inside one of the locally verified sets. Although the learned reachability policy may incur approximation errors and cannot guarantee safety, using it as a warm-start for MPPI permits still leveraging the learned policy in a useful way. Moreover, even optimal reach-avoid policies may not guarantee that the system remains within the target set indefinitely~\cite{chenevert2024solving}.
Thus, we also employ MPPI to reach the goal state once the current state is within the target set \(\mathcal{T}\).

As verification entails a trade-off between accuracy and computational cost, the multi-tier structure enables adaptive verification, performing only the computation needed to certify the reachability set. 
The resulting hierarchical architecture is detailed in Algorithm~\ref{alg:adaptive_hybrid}, with switching logic illustrated in Fig.~\ref{fig:cover_fig}.

\begin{enumerate}
    \item \textbf{Tier 0: Target Maintenance Mode:} If \(x_t \in \mathcal{T}\), we apply a fast MPPI controller to maintain the system within the target set.
    \item \textbf{Tier 1: Global Verification Mode:} If \(x_t \in \mathcal{V}_{\text{global}}\), we use \(\pi_{lp}\) to safely reach the target set.
    \item \textbf{Tier 2: Local Refinement Mode:} If \(x_t \notin \mathcal{V}_{global}\) but \(x_t \in \mathcal{V}_{local}\), the agent utilizes the ``reclaimed'' local refinement from Algorithm \ref{alg:iterativegrowth}. The agent continues to leverage $\pi_{lp}$, as the local neighborhood has been probabilistically certified for that specific policy.
    \item \textbf{Tier 3: Safety Recovery Mode:} If the state \(x_t\) cannot be certified by either the global or local verification, the system switches to a recovery MPPI controller. Warm-started by \(\pi_{lp}\), the controller minimizes a safety-oriented cost defined by the distance to the nearest verified RA set boundary, thereby steering the system toward re-entry into the verified set when feasible.
\end{enumerate}

\begin{algorithm}[t]
\caption{Hierarchical Safe Control Framework}
\label{alg:adaptive_hybrid}
\begin{algorithmic}[1]
\REQUIRE State $x_t$, Verified safe set $\mathcal{V}_{global}$, Policy $\pi_{lp}$
\ENSURE Control action ${u}_t$

\tcp{Tier 0: Maintain Advantage Mode}
\STATE $\mathcal{T} \gets \text{UpdateSet}(x_{t})$ \tcp{Update Target Set}
\IF{$x_t \in \mathcal{T}$}
    \STATE ${u}^*_{0:H} \gets \text{MPPI\_Fast}$ 
    
    \RETURN ${u}^*_0$ 
    
\ELSIF{$x_t \in \mathcal{V}_{global}$}
\STATE \ \tcp{Tier 1: Global Verification}
    \RETURN $\pi_{lp}({x}_t)$ 

\ELSIF{${x}_t \in\mathcal{V}_{local} \gets \text{IterativeGrowth}({x}_t, \pi_{lp}, \mathcal{V}_{global})$}
\STATE\  \tcp{Tier 2: Local Refinement}
    \RETURN $\pi_{lp}({x}_t)$ 
\ELSE
\STATE \  \tcp{Tier 3: Safe Recovery (MPPI)}
\STATE ${u}^*_{0:H} \gets \text{MPPI\_Rec}( \text{Warmstart}=\pi_{lp})$ 

\RETURN ${u}^*_0$ 
\ENDIF
\end{algorithmic}
\end{algorithm}

\section{Theoretical Reach-avoid Guarantees}
Given the multi-tier structure, our theoretical guarantees are conditioned on the verification mode.
Accordingly, we consider the probabilistic guarantees sequentially.

\subsection{Tier 1 \& Tier 2: Probabilistic RA Set Verification}
Global verification mode (tier 1) and local refinement mode (tier 2) make similar guarantees aligned with probabilistic certification via scenario optimization.

\begin{proposition}[Sequential Probabilistic Guarantees]
\label{sequentialprop}
    For the global certificate, if \(x \in \mathcal{V}_{global}\), we achieve the following probabilistic guarantee:
\begin{equation*}
    \mathbb{P}_{\mathcal{S}_1}(\mathbb{P}_{\bar{\mathcal{X}}}(\{V_\gamma(x_t) <  0\} \cap \{\check{V}_\gamma(\bar{x}_t, T) \geq 0 \} ) \leq \epsilon) \geq 1-\beta
\end{equation*}
 
If \(x_t \notin \mathcal{V}_{global}\) but \(x_t \in \mathcal{V}_{local}\), we leverage the local certificate as the next ``expert" to consider:
\begin{equation*}
    \mathbb{P}_{S_2}(\mathbb{P}_{\mathcal{V}_{local}}( \sup_{t \in [0, T]} g_{\gamma}(\xi_{x_0^{(j)}}^{\pi_{lp}}, t) \leq 0 )\leq \epsilon)\geq 1-\beta
\end{equation*}

\end{proposition}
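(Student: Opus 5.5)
The proposition is essentially a case split on the tier in which the current state $x_t$ is classified. Each case reduces to a result already established. The only new content is to check that each guarantee applies to the state at which the switching logic of Algorithm~\ref{alg:adaptive_hybrid} invokes it. I would therefore argue the two branches separately, conditioning on which branch of the \texttt{if/elsif} is active, and relabel time so that the current state plays the role of the initial condition.

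\emph{Tier 1.} Suppose $x_t \in \mathcal{V}_{global}$. I would view $x_t$ as a fresh initial condition $x_0$ for the horizon-$T$ problem. This is legitimate because the value function $V_\gamma(\cdot,T)$ and the certificate $\check{V}_\gamma(\cdot,T)$ are time-invariant functions of the state: the dynamics are autonomous and $\pi_{lp}$ is a stationary feedback. Membership in $\mathcal{V}_{global}$ means $\check{V}_\gamma(\bar{x}_t,T)\ge 0$ for the nominal point $\bar{x}_t\in\bar{\mathcal{X}}$ associated with $x_t$. Theorem~\ref{globalsafetythm} was proved for an arbitrary pair $(\bar{x}_0,x_0)$ drawn from $\mathcal{X}_{pair}$, with the same scenario sample $S_1$ used to compute every $\Delta x_\tau^\ast$. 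Substituting $(\bar{x}_t,x_t)$ for $(\bar{x}_0,x_0)$ therefore gives the first displayed inequality verbatim. The outer measure stays $\mathbb{P}_{\mathcal{S}_1}$ because no new samples are drawn online.

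\emph{Tier 2.} Suppose $x_t\notin\mathcal{V}_{global}$ but $x_t\in\mathcal{V}_{local}=\mathcal{B}_{r^\ast}(\check{x}_t)$ returned by Algorithm~\ref{alg:iterativegrowth}. I would cast the final accepted iteration as an instance of the scenario program \eqref{eq:scenario_prob}. The decision variable is the scalar radius $r$, so $d=1$. The uncertainty is the sampled initial condition in the ball, and the constraint is $h(r,x_0)\le 0$ iff $x_0\notin\mathcal{B}_r(\check{x}_t)$ or $\sup_t g_\gamma(\xi^{\pi_{lp}}_{x_0},t)>0$. With $N\ge \frac{2}{\epsilon}(\ln\frac1\beta+1)$ samples, Theorem~\ref{samplecomplexitythm} yields \eqref{eq:local growth refinement}, which is exactly the second claimed inequality under the sample measure $\mathbb{P}_{S_2}$.

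The main obstacle is justifying Tier 2 rigorously. The algorithm is iterative, and the region from which the final $N$ samples were drawn is $\mathcal{B}_{r}$ for the \emph{previous} radius, not $\mathcal{B}_{r^\ast}$. The choice of $r^\ast$ also depends on the data, and the constraint in $r$ is a union-type condition that is not obviously convex. My first attempt would be to reason only about the terminating iteration, in which all $N$ fresh samples are violation-free. I would condition on the radius in force at that iteration, which is fixed given the past samples, and apply the sample-complexity bound to the resulting fixed region. Shrinking to $\mathcal{B}_{r^\ast}$ then only removes points, so the conditional violation probability does not increase under the restricted uniform measure, up to a normalization I would need to check.

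If the loop exhausts its $M$ iterations without a violation-free round, no certificate is claimed and control falls to Tier 3. To account for adaptivity across the $M$ rounds, I would otherwise fall back on a union bound, replacing $\beta$ by $\beta/M$ per round.
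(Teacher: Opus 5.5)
Your route is the paper's. Its proof is the single sentence that both cases follow from Theorems~\ref{samplecomplexitythm} and~\ref{globalsafetythm}: Tier~1 is Theorem~\ref{globalsafetythm} with $x_t$ relabeled as the initial condition, and Tier~2 is the sample-complexity bound applied to the local scenario program with $d=1$. Your added care for Tier~2 is sound and goes beyond the paper. You condition on the round that triggers the \textbf{break}, whose radius is fixed by earlier samples, and charge $\beta/M$ per round. You also refuse a certificate when all $M$ rounds find violations; as written, Algorithm~\ref{alg:iterativegrowth} would then return a shrunken ball that no sample has tested.

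One correction. The shrinking step you worry about never happens: in the \textbf{break} round $r^\ast$ is not updated. So $\mathcal{V}_{local}$ is exactly the ball from which the last $N$ samples were drawn, and your terminating-round argument gives the bound under $\mathbb{P}_{\mathcal{V}_{local}}$ directly. That is fortunate, because the claim that shrinking cannot raise the conditional violation probability is false. For uniform measures on nested balls in $\mathbb{R}^n$, $\mathbb{P}_{\mathcal{B}_{r^\ast}}(U)=(r/r^\ast)^n\,\mathbb{P}_{\mathcal{B}_{r}}(U\cap\mathcal{B}_{r^\ast})$, which can exceed $\epsilon$ by the factor $(r/r^\ast)^n$. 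The same normalization defect affects your one-shot $d=1$ casting if the samples come from a fixed outer ball such as $\mathcal{B}_{r_{max}}(\check{x}_t)$: it only bounds the unconditional mass of unsafe points inside $\mathcal{B}_{r^\ast}$. Convexity is also not the obstruction. For a fixed sample $x_0$, the feasible set in $r$ is the half-line $r\le\|x_0-\check{x}_t\|_2$ if $x_0$ is unsafe, and all of $\mathbb{R}$ otherwise. The real issue is that the sampling distribution depends on the current radius, which your conditioning argument handles.
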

\begin{proof}
This follows from Theorems~\ref{samplecomplexitythm} and~\ref{globalsafetythm}.
\end{proof}
\begin{remark}
    If \(x_t \in \mathcal{V}_{global} \cap \mathcal{V}_{local}\), the problem evolves into choosing between ``experts'' (or confidence of probabilistic guarantees), which we defer to future work.
\end{remark}

\begin{figure*}[t]
    \centering
    \includegraphics[width=0.85\linewidth]{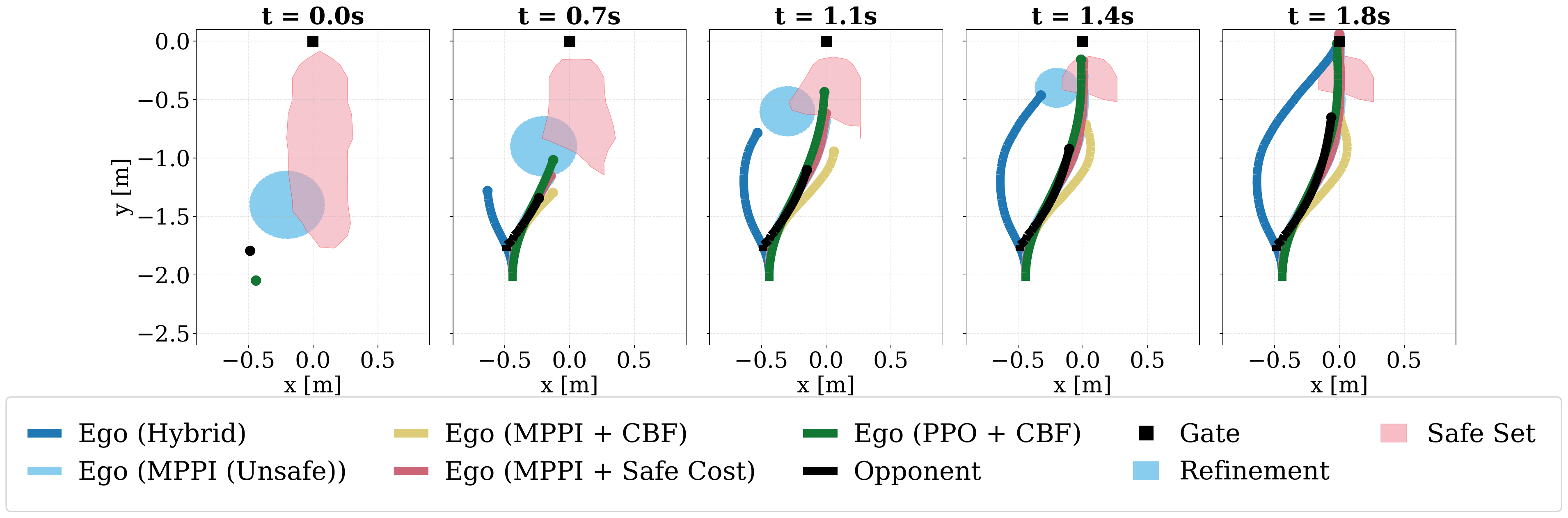}\vspace{-0.5em}
    \caption{Evolution of a drone racing simulation against an opponent (black), comparing our method (Hybrid, dark blue) with various baselines. Baselines struggle to balance safety and goal-reaching, whereas our method achieves both, safely overtaking and reaching the gate. Moreover, we observe that the ego agent leverages safe recovery MPPI to try to enter the locally refined set. 
    }
    \label{fig:trajectory}
\end{figure*}\vspace{-0.5em}

\begin{figure}[t]
    \centering
    \includegraphics[width=1.0\linewidth]{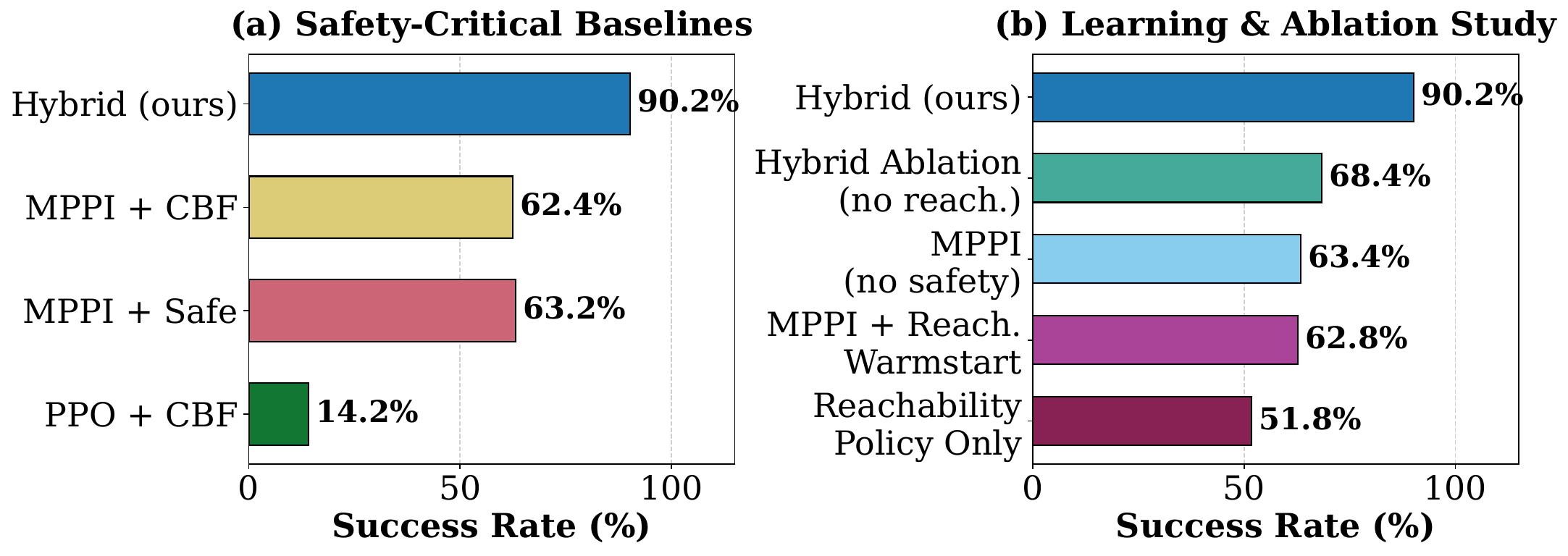}\vspace{-0.5em}
    
    \caption{(a) Performance of our framework compared to control barrier function (CBF) and soft-constraint baselines. (b) Ablation study comparing the hybrid method to its learning-only and model-based variants. In both (a) and (b), the results for our method are shown for reference. \emph{Our framework achieves higher success probability over 500 sampled initial conditions, effectively balancing collision avoidance and goal-reaching.}
    }
    \label{fig:bar_graphs}
\end{figure}

\subsection{Tier 3: Finite-time Recovery}
If \(x_t \notin (\mathcal{V}_{global} \cup \mathcal{V}_{local}) \), no safety guarantees can be made. Thus, the best course of action is safe recovery via reachability warm-started MPPI. The forward RA set of the current state \(x_t\), \(\mathcal{V}_{recover}\), should be computed. If \(\mathcal{V}_{recover} \cap \mathcal{V}_{local} \neq \varnothing\) or \(\mathcal{V}_{recover} \cap \mathcal{V}_{global} \neq \varnothing\), then there exists a trajectory at time step t such that recovery to \(\mathcal{V}_{local}\) or \(\mathcal{V}_{global}\) is achieved.

\section{Experiments and Results}
\subsection{Drone Racing Case Study}
We consider a drone racing example as defined in \cite{li2025certifiable}, a challenging high-dimensional testbed with dynamic interactions and safety-critical constraints, where the ego agent must safely overtake an opponent to pass through the gate first. For computational efficiency, we set \(\beta = 0.001\) and \(\epsilon = 0.1\) for both global and local verification, requiring 158 samples each. The reason for choosing this \(\epsilon\) value is to reduce conservatism and sample complexity; smaller values can be used at the cost of increased samples and computation.
 
\subsubsection{Joint System Dynamics}
\label{jointdynamics}
We consider a 12D non-cooperative racing scenario involving an ego drone (\(e\)) and an opponent drone (\(o\)). The state of each drone \(i \in \{e, o\}\) is defined by its 3D position and velocity: \({x}^i_t=[p_{x,t}^i, v_{x,t}^i, p_{y,t}^i, v_{y,t}^i, p_{z,t}^i, v_{z,t}^i] \in \mathbb{R}^6\). The joint system state is defined as the concatenation \({x}_t=[{x}^e_t, {x}^o_t] \in \mathcal{X} \subseteq \mathbb{R}^{12}\). 
The \(i\)-th drone is modeled by double-integrator dynamics, with control input \({u}_t^i = [a^i_{x,t}, a^i_{y,t}, a^i_{z,t}]\), satisfying \(\norm{{u}^i_t}_\infty \leq \epsilon_u := 1\,\text{m/s}^2\). We assume the opponent has bounded control authority and follows an LQR controller that drives it toward the gate aggressively, with high speed. 

\subsubsection{Target Set} The mission objective is to pass through a narrow gate centered at \([0,0,0]\) with width \(w\) and height \(h\). However, for the purpose of reachability, the target set for the ego drone is defined as
\begin{equation}
\label{eq:targetset}
    \mathcal{T}=\left\{ 
    \begin{array}{lcr}
       x:p_y^e-p_y^o >0,  & v_y^e-v_y^o>0,  \\
       |p_x^e|<0.3,  & |p_z^e|<0.3 
    \end{array}
     \right\}.
\end{equation}
This target set requires the ego drone to be ahead of and faster than the opponent while staying within a corridor for gate passage (radius 0.3 m).

\subsubsection{Safety Constraints}
To ensure safe flight, the ego drone should avoid the area affected by the downward airflow from the opponent drone, leading to the following nonconvex constraint, which is challenging to satisfy:
\begin{equation}
\label{eq:constraints}
    \norm{
    \left[
    \begin{array}{cc}
         p^e_{x,t}-p^o_{x,t}  \\
         p^e_{y,t}-p^o_{y,t} 
    \end{array}
    \right]
    }^2_2> \Big(1+\max(p^o_{z,t}-p^e_{z,t},0)\Big) \times 0.2,
\end{equation}
where the collision avoidance region between the ego and opponent drone grows as their height difference increases. Moreover, to ensure the ego drone passes through the gate without colliding with its boundaries, we impose the following constraints as in \cite{li2025certifiable}:
\begin{equation}
    \pm p_{x,t}^e - p_{y,t}^e  > -0.05,\hspace{0.5cm}  \pm p_{z,t}^e - p_{y,t}^e  > -0.05.
\end{equation}

\subsection{Hypothesis 1: The reachability-based guidance of our method leads to a higher success rate than soft- and hard-constrained model-based controllers or RL policies.}
\label{hypothesis1}

We empirically compare our framework to various safe baseline methods: MPPI with a control barrier function (CBF) safety filter, MPPI with safe soft constraints, and proximal policy optimization (PPO) with a CBF safety filter. We sample 500 initial conditions in a bounded region of the state space and roll out the trajectories. Success is quantified by the ego safely reaching the gate before the opponent, and our framework achieves a higher rate of success than its safe counterparts as shown in Figure \ref{fig:bar_graphs}(a). Qualitatively, baseline methods tend to prioritize goal-reaching at the expense of safety, whereas our method demonstrates more robust collision avoidance during overtaking (Fig.~\ref{fig:trajectory}).

\subsection{Hypothesis 2: The hierarchical control framework features a better combination of MPPI and learning.}
Similar to Section \ref{hypothesis1}, we conduct an empirical study comparing our framework to baseline methods: Hybrid Ablation which conducts switching but replaces the reachability policy with MPPI, pure MPPI with no safety consideration, MPPI warm-started with the reachability policy, and the reachability policy only. We select these baselines to observe the interaction between MPPI, the reachability policy, and the switching mechanism. 
We sample 500 initial conditions in a bounded region of the state space and roll out the trajectories. 
The results in Figure \ref{fig:bar_graphs}(b) highlight the necessity of both the switching mechanism and the reachability policy as none of the other baselines achieved the same rate of success as our hierarchical framework. Notably, the reachability policy prioritizes reaching the target set but does not guarantee maintaining the system within it, a limitation inherent to reachability theory, as also observed in \cite{chenevert2024solving}. This explains its lower success rate when used alone.

\section{CONCLUSIONS}
In this work, we propose an efficient hierarchical probabilistic verification framework for reachability learning that unifies global, local, offline, and online verification. By combining a coarse global certificate via scenario optimization with local convex refinement, the approach reduces conservatism while focusing computation on critical regions. A switching mechanism between learned and model-based controllers ensures safe and efficient control. The framework provides probabilistic safety guarantees and improves performance in safety-critical tasks, as demonstrated in drone racing simulations. Future work includes hardware validation and extension to more complex multi-agent settings and uncertain real-world environments.



\bibliographystyle{ieeetr}
\bibliography{references}

\end{document}